\newtheorem{thm}{Theorem}[section]
\newtheorem{lem}[thm]{Lemma}
\newtheorem{prop}[thm]{Property}
\newtheorem{rem}[thm]{Remark}
\newtheorem{num}[thm]{Numerical Result}
\newtheorem{numprop}[thm]{Numerical Result}
\newcommand{\abs}[1]{\left\vert#1\right\vert}
\newcommand{\Real}{\mathbb R}
\newcommand{\eps}{\varepsilon}
\newcommand{\qubit}[1]{\ensuremath{{|#1\rangle}}}
\newcommand{\qubitc}[1]{\ensuremath{{\langle#1|}}}
\newcommand{\qrho}[1]{\qubit{#1}\qubitc{#1}}
\newcommand{\qpsi}{\ensuremath{{\qubit{\psi}}}}
\newcommand{\qphi}{\ensuremath{{\qubit{\varphi}}}}
\newcommand{\enum}[2]{\ensuremath{{#1_1#2#1_2#2\ldots#2#1_n}}}
\newcommand{\Ent}{\ensuremath{{E_{Hmin}}}}
\newcommand{\shannon}{\ensuremath{H_{sh}}}
\newcommand{\shrho}[1]{\shannon(Diag({#1}))}
\newcommand{\vn}{\ensuremath{H_{vN}}}
\newcommand{\hmes}{\ensuremath{H_{meas}}}
\begin{document}

\title{Entanglement measure for multipartite pure states and its numerical calculation}
\author{A. Yu.~Chernyavskiy}
\email{andrey.chernyavskiy@gmail.com}
\affiliation{Institute of
Physics and Technology, Russian Academy of Sciences} \pacs{03.67.-a,
03.67.Mn, 03.65.Ud}
\begin{abstract}

The quantification and classification of quantum entanglement is a
very important and still open question of quantum information
theory. In this paper, we describe an entanglement measure for
multipartite pure states (the minimum of Shannon's entropy of
orthogonal measurements). This measure is additive, monotone under
LOCC, and coincides with the reduced von Neumann entropy on
bipartite states. A method for numerical calculation of this measure
by genetic algorithms is also presented. Moreover, the minimization
of entropy technique is extended to fermionic states.

\end{abstract}
\maketitle

\section{Introduction}
Multipartite quantum entanglement plays significant role in whole
quantum science. For example, one can easily obtain that we can't
achieve algorithmic speedup in quantum computer without quantum
entanglement. Also entanglement is needed for other quantum
protocols: quantum teleportation, quantum error correction, etc.
What is more, the theory of quantum entanglement may help us to
understand multiparticle quantum physics deeply.

It's well known that the entanglement of bipartite pure states is
fully described. Schmidt coefficients uniquely define a local
unitary orbit of a bipartite quantum state $\qpsi$ and unambiguously
determine the class of states that can be obtained from $\qpsi$ by
LOCC \cite{nielsen2000qca}.

Schmidt decomposition
$$\sum\limits_{i,j} {{c_{ij}}} |i{\rangle _A} \otimes |j{\rangle _B} = \sum\limits_i {\sqrt {{\lambda _i}} } |i{\rangle _A} \otimes |i{\rangle _B}$$
is an equivalent of matrix SVD decomposition
$$\begin{array}{c}
 \sum\limits_{i,j} {{c_{ij}}} |i{\rangle _A}{\langle j{|_B} = \sum\limits_i {\sqrt {{\lambda _i}} } |i\rangle _A}\langle i{|_B}, \\
 A = US{V^*}. \\
 \end{array}
$$

It's easy to see that there is no Schmidt decomposition even for
three qubits. The $W$ state $$\frac{1}{{\sqrt 3 }}(|001\rangle  +
|010\rangle  + |100\rangle )$$ cannot be represented as $${\lambda
_0}|\widetilde0\rangle  \otimes |\widetilde0\rangle  \otimes
|\widetilde0\rangle  + {\lambda _1}|\widetilde1\rangle  \otimes
|\widetilde1\rangle  \otimes |\widetilde1\rangle.$$ A stronger
counterexample exists for SVD decomposition: there is no lower-rank
orthogonal approximation even for third-order tensors
\cite{kolda2001cpe}.

While a lot of applications use SVD: image compression (for example,
\cite{waldemar1997hks}), LSI (Latent Semantic Indexing, for example,
\cite{deerwester1990ils}), statistics (for example,
\cite{hammarling1985svd}), etc., the appropriate expanding of SVD
decomposition (Schmidt decomposition) to higher order tensors (to
more than two subsystems) is an important task not only for quantum
information theory but for fundamental (tensor algebra) and applied
(statistics, machine learning, etc.) math. Some higher order
analogues of SVD may be found (for example,
\cite{lathauwer1995msv}), but non of them can be directly applied to
the quantum pure state case.

Whereas there is no full theory of multipartite quantum
entanglement, some important results are known.

A full classification of pure states of three qubits in terms of
SLOCC (stochastic LOCC) is given in \cite{dur2000tqc}. A description
of quantum entanglement with nilpotent polynomials is constructed in
\cite{mandilara2005dqe}. Some entanglement measures of pure and
mixed states can be found, for example, in
\cite{plenio2005lnf,vidal2002qpt,horodecki2007qe,wong2000pme,
brennen2003ome,vedral1997ema}. Theory of quantum entanglement
measures is studied in, for example,
\cite{plenio2005iem,bouwmeester2000pqi,bruss2002ce,vidal2000em,amico2007emb}.

The paper is organized as follows: In the first section of this
paper we present the minimum of Shannon's entropy of orthogonal
measurements and prove that this function possesses all necessary
pure state entanglement measure properties. Some other features of
this measure and its values for some states are also presented. Next
we describe a method to compute this measure using genetic
algorithms. An extension of minimal measurements entropy to
fermionic states is described in the last part of the paper.

\subsection{Notation}
$\shannon$ is the Shannon entropy,

$\vn$ is the von Neumann entropy.
\section{Shannon's entropy of orthogonal measurements minimum}

Consider a pure $n$-qudit qauntum state
\begin{equation}
\label{quditState}
\qpsi=\sum\limits_{\enum{i}{,}=1}^{d}a_{\enum{i}{,}}\qubit{\enum{i}{}}.
\end{equation}

The measurement of this state in the computational basis gives
states $\qubit{\enum{i}{}}$ with probabilities
$|a_{\enum{i}{,}}|^2$. So we can regard this measurement process as
a signal generator. We shall say that Shannon's entropy of this
generator is called \emph{measurements entropy} of the state
$\qpsi$:
$$\hmes(\qpsi)=\shannon(Diag(\qrho{\psi}))=\sum\limits_{\enum{i}{,}=1}^{d}|a_{\enum{i}{,}}|^2.$$
But this characteristic of a quantum state is not invariant under
local changes of the measurement basis. Shannon's source coding
theorem shows that, in the limit, the average length of the
\emph{shortest} possible representation to encode the messages in a
given alphabet is their entropy divided by the logarithm of the
number of symbols in the target alphabet. That's why the natural way
to construct an invariant is the minimization of $\hmes(\qpsi)$ over
all possible local changes of the measurements basis:
\begin{equation}
\label{edef}
\Ent(\qpsi)=\min\limits_{\enum{U}{,}}
\hmes(\enum{U}{\otimes}\qpsi);
\end{equation} and $\Ent$ is a measure of entanglement of the pure
state $\qpsi$.

\begin{rem}
\label{remQudit} The definition of $\Ent$ for subsystems with
different dimensions is the same to (\ref{edef}). As non of the
following reasoning bears on the equality of subsystem dimensions,
here and further the common $d$ for different $i_j$ in
(\ref{quditState}) is used only for notation simplicity.
\end{rem}

\subsection{Necessary properties of entanglement measure}

Entanglement measure for pure quantum states must satisfy the
following intuitive conditions:

(i) must be equal to zero for fully unentangled states

(ii) must be invariant under local unitary operations

(iii) must be invariant under attachment and detachment of
unentangled ancilla

(iv) must not increase under LOCC.

Let's show that $\Ent$ satisfies (i)-(iv). First of all we consider
another one important property:

\begin{prop}(Additivity of \Ent)

Let $\qubit{\psi_1}$ and $\qubit{\psi_2}$ be multipartite (may be
entangled) pure states, then
$$\Ent(\qubit{\psi_1} \otimes \qubit{\psi_2}) = \Ent(\qubit{\psi_1})
+ \Ent(\qubit{\psi_2}).$$
\end{prop}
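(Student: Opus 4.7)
The plan is to reduce additivity of $E_{Hmin}$ to the well-known additivity of Shannon entropy on product distributions, using that local unitaries on $|\psi_1\rangle\otimes|\psi_2\rangle$ factor into local unitaries on the two tensor factors.

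First, I would fix notation: write $|\psi_1\rangle$ on $n$ subsystems and $|\psi_2\rangle$ on $m$ subsystems, and observe that any tuple of local unitaries $(U_1,\dots,U_{n+m})$ on the composite system splits as $(U_1,\dots,U_n)$ acting on $|\psi_1\rangle$ and $(U_{n+1},\dots,U_{n+m})$ acting on $|\psi_2\rangle$. Denote these by $\mathcal{U}$ and $\mathcal{V}$.

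Second, I would compute the diagonal of the density matrix after applying $\mathcal{U}\otimes\mathcal{V}$. If $|\psi_1\rangle$ has amplitudes $a_{\vec{i}}$ and $|\psi_2\rangle$ has amplitudes $b_{\vec{j}}$ in the rotated computational bases, then $(\mathcal{U}\otimes\mathcal{V})(|\psi_1\rangle\otimes|\psi_2\rangle)$ has amplitudes $a_{\vec{i}} b_{\vec{j}}$, so the measurement probability distribution is the product distribution $p(\vec{i},\vec{j}) = |a_{\vec{i}}|^2 |b_{\vec{j}}|^2$. By the standard additivity of Shannon entropy on product distributions,
\begin{equation*}
H_{meas}\bigl((\mathcal{U}\otimes\mathcal{V})(|\psi_1\rangle\otimes|\psi_2\rangle)\bigr)
= H_{meas}(\mathcal{U}|\psi_1\rangle) + H_{meas}(\mathcal{V}|\psi_2\rangle).
\end{equation*}

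Third, I would take the minimum over all local unitaries. Since the right-hand side is a sum of two terms depending on disjoint sets of variables ($\mathcal{U}$ and $\mathcal{V}$), the infimum factorizes:
\begin{equation*}
\min_{\mathcal{U},\mathcal{V}} \bigl[H_{meas}(\mathcal{U}|\psi_1\rangle) + H_{meas}(\mathcal{V}|\psi_2\rangle)\bigr]
= \min_{\mathcal{U}} H_{meas}(\mathcal{U}|\psi_1\rangle) + \min_{\mathcal{V}} H_{meas}(\mathcal{V}|\psi_2\rangle),
\end{equation*}
which is exactly $E_{Hmin}(|\psi_1\rangle)+E_{Hmin}(|\psi_2\rangle)$.

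There is essentially no hard step: because we only allow local unitaries (and not joint unitaries that mix subsystems of $|\psi_1\rangle$ with those of $|\psi_2\rangle$), the two minimizations genuinely decouple, and the entropy functional is exactly additive on product distributions with no slack. The only place to be careful is conceptual: one must note that an operation that entangles $|\psi_1\rangle$ and $|\psi_2\rangle$ is by definition \emph{not} local with respect to the partition into the $n+m$ subsystems used in the definition of $E_{Hmin}$, so it is not part of the feasible set and cannot reduce the minimum further.
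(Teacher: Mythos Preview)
Your argument is correct and is exactly the approach the paper indicates: the paper simply says the result is ``straightforward from the definition of $\Ent$ and additivity of Shannon's entropy,'' and you have faithfully unpacked that one-line justification.
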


The proof is straightforward from the definition of $\Ent$ and
additivity of Shannon's entropy.

The properties (i)-(ii) are trivial and (iii) is the consequence of
$\Ent$ additivity. To prove (iv), since (ii) and (iii) are correct,
we may only prove that $\Ent$ doesn't increase under orthogonal
measurements in the mean. To show monotonicity under orthogonal
measurements in the mean we need some lemmas.

\begin{lem}
\label{LemmaRho1} Let $\rho_{AB}$ be a density matrix of a pure
bipartite state. Then
$$\shrho{\rho_{A}}\leq \shrho{\rho_{AB}} \leq \shrho{\rho_{A}} + \shrho{\rho_{B}},$$
where $\rho_{A}$ and $\rho_{B}$ are reduced density matrices.
\end{lem}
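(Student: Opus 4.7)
The plan is to reduce the statement to standard classical Shannon inequalities for a joint probability distribution and its marginals. Fix the computational product basis and expand $\qpsi = \sum_{i,j} a_{ij}\qubit{i}_A\otimes\qubit{j}_B$. Then the diagonal of $\qrho{\psi}$ in this basis is precisely the joint distribution $p_{ij}=|a_{ij}|^2$, while the diagonals of $\rho_A=\mathrm{Tr}_B\qrho{\psi}$ and $\rho_B=\mathrm{Tr}_A\qrho{\psi}$ are exactly the marginals $p^A_i=\sum_j p_{ij}$ and $p^B_j=\sum_i p_{ij}$. This identification is the first step I would write out carefully, because it is the only place where the structure of a bipartite density matrix actually enters.

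With this reduction, $\shrho{\rho_{AB}}$ becomes the classical joint entropy $H(X,Y)$ of the distribution $\{p_{ij}\}$, and $\shrho{\rho_A}$, $\shrho{\rho_B}$ become $H(X)$ and $H(Y)$ respectively. The lemma thus collapses to the familiar pair
$$H(X)\leq H(X,Y)\leq H(X)+H(Y).$$

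For the left inequality I would invoke the chain rule $H(X,Y)=H(X)+H(Y\mid X)$ together with the non-negativity of conditional entropy, $H(Y\mid X)\geq 0$; this is immediate since $H(Y\mid X)$ is a convex combination of the non-negative entropies of the conditional distributions $\{p_{ij}/p^A_i\}_j$. For the right inequality I would appeal to sub-additivity, which follows from the non-negativity of mutual information $I(X;Y)=H(X)+H(Y)-H(X,Y)\geq 0$; this in turn is a one-line consequence of Jensen's inequality applied to $-\log$, or equivalently of the log-sum inequality with $p_{ij}$ and $p^A_i p^B_j$.

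There is no substantial obstacle here: the content of the lemma is purely that of Shannon's classical joint-entropy bounds, once one recognises the diagonals as a bivariate distribution and its marginals. The only care required is notational — keeping track of which basis the $Diag(\cdot)$ operation refers to — and noting that the purity of $\rho_{AB}$ is in fact not used in this particular lemma, but is part of the setting in which the lemma will be applied to prove property (iv) for $\Ent$.
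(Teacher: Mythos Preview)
Your proof is correct and essentially identical to the paper's: both reduce the lemma to the classical Shannon inequalities $H(X)\leq H(X,Y)\leq H(X)+H(Y)$, invoking the grouping/chain-rule property for the left inequality and sub-additivity for the right. Your explicit identification of $Diag(\rho_{AB})$, $Diag(\rho_A)$, $Diag(\rho_B)$ with a joint distribution and its marginals is a useful clarification the paper leaves implicit, and your observation that purity is not actually used is also correct.
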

\begin{proof}
The left inequality follows from generalized grouping of Shannon's
entropy:
$$\shannon(p_1,...,p_{\sigma_1},p_{\sigma_1+1},...,p_{\sigma_2},...,p_{\sigma_{n-1}+1},...,p_{\sigma_n})$$
$$=\shannon(p_1+...+p_{\sigma_1},p_{\sigma_1+1}+...+p_{\sigma_2},...,p_{\sigma_{n-1}+1}+...+p_{\sigma_n})$$
$$ +\,\sum_{i=1}^n{(p_{\sigma_{i-1}+1}+...+p_{\sigma_i})\shannon
\Big(p_{\sigma_{i-1}+1}\Big/\sum_{j=\sigma_{i-1}+1}^{\sigma_i}{p_j},...,p_{\sigma_i}\Big/\sum_{j=\sigma_{i-1}+1}^{\sigma_i}{p_j}\Big)}
.$$ The right part follows the sub-additivity of entropy:
$$\shannon(v_{11},v_{12},...,v_{1m},v_{21},...,v_{2m},...,v_{n1},...,v_{nm}\Big)$$
\begin{displaymath}
\leq
\shannon\Big(\sum_{i=1}^n{v_{i1}},\sum_{i=1}^n{v_{i2}},...,\sum_{i=1}^n{v_{im}}\Big)
+\shannon\Big(\sum_{j=1}^m{v_{1j}},\sum_{j=1}^m{v_{2j}},...,\sum_{j=1}^m{v_{nj}}\Big).
\end{displaymath}
\end{proof}

\begin{lem}
\label{LemmaMonotone1} Let
$$\qpsi=\sum\limits_{\enum{i}{,}=1}^{d}a_{\enum{i}{,}}\qubit{\enum{i}{}}$$
be a pure $n$-qudit state. The states $\qubit{j}\otimes
\qubit{\psi_j}$ are results of the measurement of $\qpsi$ in
computational basis with probabilities $p_j$. Then
\begin{equation}
\label{MainGeq} \hmes(U_1 \qpsi) \geq \sum\limits_{j=1}^{d}p_j
\hmes(\qubit{j}\otimes \qubit{\psi_j}),
\end{equation}
 where $U_1$ is an arbitrary unitary transformation of the first qudit.
\end{lem}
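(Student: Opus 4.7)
The plan is to leverage the left inequality of Lemma \ref{LemmaRho1} to bound $\hmes(U_1 \qpsi)$ from below by a quantity that no longer depends on $U_1$, and then close the gap to the right-hand side by concavity of Shannon's entropy.

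First, I would view $(U_1\otimes I)\qpsi$ as a pure bipartite state with subsystem $A$ being the first qudit and subsystem $B$ being qudits $2,\ldots,n$. Note that the reduced density matrix on $B$ is the same whether we apply $U_1$ or not, since $U_1$ acts only on $A$; denote it $\rho_B$. Applying the left inequality of Lemma \ref{LemmaRho1} then gives
\begin{equation*}
\hmes(U_1\qpsi) \;=\; \shrho{(U_1\otimes I)\qrho{\psi}(U_1^{\dagger}\otimes I)} \;\geq\; \shrho{\rho_B},
\end{equation*}
which eliminates the unitary $U_1$ entirely.

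Second, I would rewrite $\rho_B$ using the measurement ensemble: since the outcomes $\qubit{j}\otimes\qubit{\psi_j}$ occur with probabilities $p_j$, we have $\rho_B = \sum_{j=1}^{d} p_j \qrho{\psi_j}$. Consequently the diagonal of $\rho_B$ in the computational basis is the convex combination $\sum_j p_j\,\mathrm{Diag}(\qrho{\psi_j})$. Invoking the concavity of Shannon's entropy in its probability argument,
\begin{equation*}
\shrho{\rho_B} \;=\; \shannon\!\Bigl(\sum_j p_j\,\mathrm{Diag}(\qrho{\psi_j})\Bigr) \;\geq\; \sum_j p_j\,\shannon(\mathrm{Diag}(\qrho{\psi_j})) \;=\; \sum_j p_j\,\hmes(\qubit{\psi_j}).
\end{equation*}

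Finally, I would observe that $\hmes(\qubit{j}\otimes\qubit{\psi_j})=\hmes(\qubit{\psi_j})$, because $\qubit{j}$ is a computational basis state (measurement outcome deterministic, zero Shannon entropy) and $\hmes$ is additive over tensor products, as already used in the additivity proposition. Chaining the two inequalities yields (\ref{MainGeq}).

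The only genuinely substantive step is the first one: it is the observation that Lemma \ref{LemmaRho1}, applied on the bipartition ``qudit $1$ vs.\ the rest,'' provides an $U_1$-independent lower bound on $\hmes(U_1\qpsi)$. Once this is in place, the remaining steps are standard: concavity of $\shannon$ (Jensen's inequality) and additivity of $\hmes$ on the trivial factor $\qubit{j}$. I do not anticipate any real obstacle, though one should verify that the concavity statement for $\shannon$ acts componentwise on the diagonal-probability vectors, which is immediate.
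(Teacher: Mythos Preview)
Your proof is correct and reaches the same pivot point as the paper --- namely the $U_1$-independent lower bound $\hmes(U_1\qpsi)\geq \shrho{\rho_B}$ via the left inequality of Lemma~\ref{LemmaRho1} --- but then closes the argument differently. The paper first uses strong additivity of Shannon entropy to rewrite the target inequality in the equivalent form $\hmes(\qpsi)\leq\hmes(U_1\qpsi)+\shannon(\mathbf{p})$, and then applies the \emph{right} (subadditivity) inequality of Lemma~\ref{LemmaRho1} to the untransformed state to obtain $\hmes(\qpsi)\leq\shrho{\rho_B}+\shannon(\mathbf{p})$. You instead write $\rho_B=\sum_j p_j\qrho{\psi_j}$ and invoke concavity of $\shannon$ directly to get $\shrho{\rho_B}\geq\sum_j p_j\,\hmes(\qubit{\psi_j})$. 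Since concavity of Shannon entropy and subadditivity are equivalent (via the chain rule), the two arguments are really the same inequality viewed from different ends; your version is a bit more streamlined because it avoids the reformulation step and only needs one half of Lemma~\ref{LemmaRho1}.
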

\begin{proof}
If $\qubit{j}\otimes \qubit{\psi_j}$ are results of $\qpsi$
measurement in computational basis with probabilities $p_j$, then
$\qpsi$ can be represented in the view
$$\qpsi = \sum\limits_{j=1}^{d}c_j \qubit{j}\otimes
\qubit{\psi_j},$$ where
$$|c_j|^2=p_j.$$

Let $\psi_j^k,k=\overline{1,d^{n-1}}$ be the amplitudes of
$\qubit{\psi_j},$ then
$$\hmes(\qpsi)=\hmes(\sum\limits_{j=1}^{d}c_j \qubit{j}\otimes
\qubit{\psi_j}) = $$
$$=\shannon(|c_1|^2 |\psi_1^1|^2,\ldots,|c_1|^2
|\psi_1^{d^{n-1}}|^2,\cdots,|c_d|^2 |\psi_d^1|^2,\ldots,|c_d|^2
|\psi_d^{d^{n-1}}|^2)$$

\begin{center}
$=\{$ by the strong additivity of Shannon's entropy $\}$
\end{center}

$$=\sum\limits_{j=1}^d {p_j \shannon(|\psi_j^1|^2 , |\psi_j^2|^2, \ldots, |\psi_j^{d^{n-1}}|^2)} + \shannon(p_1, p_2, \ldots, p_d)$$
$$=\sum\limits_{j=1}^{d}p_j
\hmes(\qubit{j}\otimes \qubit{\psi_j})+\shannon(\mathbf{p}).$$

Using this we can rewrite (\ref{MainGeq}) in the equivalent form
\begin{equation}
\label{MainGeq2}  \hmes(\qpsi) \leq \hmes(U_1 \qpsi) +
\shannon(\mathbf{p}).
\end{equation}

Let $\rho=\qrho{\psi}$ be the density matrix of the state \qpsi,
$\rho_1 = Tr_{2,3,\ldots,d}(\rho)$, $\rho_2 = Tr_1 (\rho)$. Then by
Lemma \ref{LemmaRho1} for $\rho$ we get
$$\shrho{\rho} \leq \shrho{\rho_2}+\shannon(Diag(\rho_1)),$$
or equivalently
\begin{equation}
\label{eq1} \hmes(\qpsi) \leq \shrho{\rho_2}+\shannon(\mathbf{p}).
\end{equation}

Let $\rho_{U_1}=U_1\qrho{\psi}U_1^*$ be the density matrix of the
state $U_1\qpsi$, $\rho_2^{U_1} = Tr_1 (\rho_{U_1})$. Since $U_1$
affects the first qudit only, $\rho_2^{U_1} = \rho_2.$

By Lemma \ref{LemmaRho1}
\begin{equation}
\label{eq2} \shrho{\rho_2} = \shrho{\rho_2^{U_1}} \leq
\shrho{\rho_{U_1}}.
\end{equation}

By (\ref{eq1}) and (\ref{eq2}) it follows that (\ref{MainGeq2}) is
correct, consequently (\ref{MainGeq}) is correct too.
\end{proof}

\begin{thm}\label{mainThm}(the monotonicity of  $\Ent$ under orthogonal measurements)

Let $\qpsi$ be an $n$-qudit state. The states $\qubit{\psi_j}$ are
results of some orthogonal measurement with probabilities $p_j$.
Then
$$\Ent(\qpsi)\geq \sum\limits_{j} p_j \Ent(\qubit{\psi_j}).$$
\end{thm}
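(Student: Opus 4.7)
My plan is to bootstrap Lemma \ref{LemmaMonotone1}, which already handles a computational-basis measurement on the first qudit, up to the full minimization that defines $\Ent$.

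First I would use property (ii) (invariance of $\Ent$ under local unitaries) to reduce to the case that the orthogonal measurement is performed on the first qudit in the computational basis. Any local orthogonal measurement on qudit $k$ in a basis $\{\qubit{\varphi_j}\}$ is, via the local unitary $W$ sending $\qubit{\varphi_j}$ to $\qubit{j}$, unitarily equivalent to a computational-basis measurement; the outcomes are relabelled by the same local unitary, on which $\Ent$ does not depend. So I may write the outcomes as $\qubit{\psi_j}=\qubit{j}\otimes\qubit{\tilde\psi_j}$ with probabilities $p_j$.

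The core step is to bound $\hmes(U\qpsi)$ from below for an arbitrary local unitary $U=U_1\otimes U_2\otimes\cdots\otimes U_n$. The key observation is that the factor $I\otimes U_2\otimes\cdots\otimes U_n$ does not affect the reduced density matrix of the first qudit, so measuring the first qudit of $(I\otimes U_2\otimes\cdots\otimes U_n)\qpsi$ in the computational basis still returns the probabilities $p_j$ but with post-measurement states $\qubit{j}\otimes(U_2\otimes\cdots\otimes U_n)\qubit{\tilde\psi_j}$. Applying Lemma \ref{LemmaMonotone1} with $U_1$ on the first qudit to this rotated state gives
$$\hmes(U\qpsi)\geq\sum_j p_j\,\hmes\bigl(\qubit{j}\otimes (U_2\otimes\cdots\otimes U_n)\qubit{\tilde\psi_j}\bigr).$$
Since $\qubit{j}$ is a computational basis vector it contributes nothing to $\hmes$, so each summand equals $\hmes((U_2\otimes\cdots\otimes U_n)\qubit{\tilde\psi_j})$, which is bounded below by $\Ent(\qubit{\tilde\psi_j})=\Ent(\qubit{\psi_j})$ (the equality uses additivity together with $\Ent(\qubit{j})=0$). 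Taking the infimum over $U_1,\ldots,U_n$ on the left yields the desired $\Ent(\qpsi)\geq\sum_j p_j\,\Ent(\qubit{\psi_j})$.

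The main obstacle is not a further hard inequality — Lemma \ref{LemmaMonotone1} supplies the only nontrivial estimate — but rather the bookkeeping needed to absorb all of the local unitaries other than the one on the measured qudit into the input state of that lemma, while preserving both the measurement probabilities and the post-measurement states up to local unitaries (so that additivity and unitary invariance of $\Ent$ apply cleanly). Once this reduction is set up, the minimization that defines $\Ent$ descends through the measurement essentially automatically.
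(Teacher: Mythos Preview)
Your argument is correct and follows essentially the same route as the paper: reduce to a computational-basis measurement on one qudit, push the local unitaries $U_2,\ldots,U_n$ through to the post-measurement states, invoke Lemma~\ref{LemmaMonotone1} for the remaining $U_1$, and then minimize. The only point the paper adds that you leave implicit is the one-line remark that a measurement on several qudits can be replaced by sequential single-qudit measurements, so the single-qudit case suffices.
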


\begin{proof}
We may assume, without loss of generality, that we measure first
qudit in the basis $\{\psi^1_i\}$ (a measurement of more than one
qudit can be replaced by sequential one-qudit measurements). Then
$\qubit{\psi_j}=\qubit{\psi^1_j}\otimes\qubit{\psi^2_j}$. Let
$\qubit{\psi_{min}}$ has minimal measurements entropy over local
unitary orbit of $\qpsi$, and $\qubit{\psi_{min}} =
\enum{U}{\otimes}\qubit{\psi}$. Then
$$\sum\limits_{j} p_j \Ent(\qubit{\psi_j}) =
\sum\limits_{j} p_j \Ent(\qubit{\psi^1_j}\otimes \qubit{\psi^2_j})
\leq \sum\limits_{j} p_j \hmes(\qubit{j}\otimes
(U_2\otimes\ldots\otimes U_n)\qubit{\psi_j})$$

$\leq\{$by Lemma \ref{LemmaMonotone1}$\} \leq \hmes(\psi_{min}) =
\Ent(\qpsi).$
\end{proof}
Thus, from \ref{mainThm} we get (iv).

\subsection{Other properties}
\begin{lem}(Klein's lemma)

Let $\rho$ be a density matrix, then
$$\shannon(Diag(\rho)) \geq \vn(\rho).$$
\end{lem}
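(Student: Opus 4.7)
The plan is to deduce this from the non-negativity of quantum relative entropy (which is itself often called Klein's inequality): for any density matrices $\rho,\sigma$ one has $\mathrm{tr}(\rho\log\rho)\geq\mathrm{tr}(\rho\log\sigma)$, with equality iff $\rho=\sigma$. I would take this as the starting point (it has a standard one-line proof from the spectral decompositions of $\rho$ and $\sigma$ together with the operator convexity of $x\log x$, or equivalently from the Peierls--Bogoliubov inequality), and reduce the statement of the lemma to it by an appropriate choice of $\sigma$.

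Concretely, I would set $\sigma := \mathrm{Diag}(\rho)$, viewing the diagonal of $\rho$ as a diagonal matrix. Since $\rho$ is positive semidefinite we have $\rho_{ii}\geq 0$, and since $\mathrm{tr}(\rho)=1$ we have $\sum_i\rho_{ii}=1$, so $\sigma$ is itself a legitimate density matrix. Because $\sigma$ is diagonal in the chosen basis, $\log\sigma$ is the diagonal matrix with entries $\log\rho_{ii}$ (with the usual convention $0\log 0=0$ for any zero diagonal entries, which simply removes those indices from the sum). Then
\begin{equation*}
\mathrm{tr}(\rho\log\sigma)=\sum_i\rho_{ii}\log\rho_{ii}=-\shannon(\mathrm{Diag}(\rho)),
\end{equation*}
while $\mathrm{tr}(\rho\log\rho)=-\vn(\rho)$. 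Substituting into Klein's inequality $\mathrm{tr}(\rho\log\rho)\geq\mathrm{tr}(\rho\log\sigma)$ and multiplying by $-1$ yields exactly $\shannon(\mathrm{Diag}(\rho))\geq\vn(\rho)$.

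The only real step requiring care is the invocation of the relative entropy inequality itself; once that is in hand, the rest is a direct computation. An equivalent route, which I would mention as an alternative if a self-contained argument is preferred, is to use the Schur--Horn theorem: the diagonal $(\rho_{11},\ldots,\rho_{nn})$ is majorized by the eigenvalue vector of $\rho$, and the Shannon entropy is Schur-concave, so $\shannon(\mathrm{Diag}(\rho))\geq\shannon(\mathrm{spec}(\rho))=\vn(\rho)$. Either approach is short; I expect no substantive obstacle beyond citing (or briefly justifying) the background inequality one chooses to build on.
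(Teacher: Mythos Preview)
Your argument is correct. Both routes you sketch --- the relative-entropy route with $\sigma=\mathrm{Diag}(\rho)$ and the Schur--Horn/Schur-concavity route --- are standard and valid proofs of this inequality; the only minor point worth tightening is the support issue when some $\rho_{ii}=0$, but since $\langle i|\rho|i\rangle=0$ with $\rho\geq 0$ forces $\rho|i\rangle=0$, the support of $\rho$ is contained in that of $\sigma$ and the relative entropy is finite, so your handling via the $0\log 0$ convention is legitimate.

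As for comparison with the paper: there is nothing to compare against. The paper simply states the lemma under the name ``Klein's lemma'' and uses it without proof, treating it as a known result (indeed the acknowledgments point to Bhatia's \emph{Matrix Analysis}, where this inequality appears). Your write-up therefore supplies strictly more than the paper does. If you want to match the paper's style you could just cite the result; if you want a self-contained manuscript, either of your arguments is appropriate, with the Schur--Horn route being the one most textbooks (including Bhatia) present under this name.
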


\begin{thm}
\label{thmVn} $\Ent$ coincides with the reduced von Neumann entropy
for bipartite states. I.e., let $\qpsi$ be a pure bipartite state;
$\rho=\qrho{\psi}$ is its density matrix, $\rho_A=Tr_B(\rho)$ and $
\rho_B=Tr_A(\rho)$ are reduced density matrices of subsystems. Then
$\Ent(\qpsi)=\vn(\rho_A)=\vn(\rho_B)$.
\end{thm}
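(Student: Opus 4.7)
The plan is to sandwich $E_{Hmin}(\psi)$ between $H_{vN}(\rho_A)$ from above and below, exploiting the Schmidt decomposition on one side and the two tools already assembled (Lemma \ref{LemmaRho1} and Klein's lemma) on the other.

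For the upper bound, I would invoke the Schmidt decomposition: there exist local unitaries $U_1, U_2$ such that
\[
(U_1\otimes U_2)\qpsi = \sum_i \sqrt{\lambda_i}\,\qubit{i}_A\otimes\qubit{i}_B,
\]
where $\{\lambda_i\}$ is the spectrum of $\rho_A$ (equivalently of $\rho_B$). In this basis the amplitude at $\qubit{ij}$ is $\sqrt{\lambda_i}\,\delta_{ij}$, so the measurement distribution is exactly $\{\lambda_i\}$ padded with zeros. Hence $\hmes((U_1\otimes U_2)\qpsi) = -\sum_i \lambda_i\log\lambda_i = \vn(\rho_A)$, which already gives $\Ent(\qpsi)\le \vn(\rho_A)$.

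For the lower bound I would take arbitrary local unitaries $U_1, U_2$ and let $\rho' = (U_1\otimes U_2)\rho(U_1\otimes U_2)^*$, whose $A$-reduction is $\rho'_A = U_1\rho_A U_1^*$ (same spectrum as $\rho_A$). Applying the left half of Lemma~\ref{LemmaRho1} to $\rho'$ yields
\[
\hmes\bigl((U_1\otimes U_2)\qpsi\bigr) \;=\; \shannon(Diag(\rho')) \;\geq\; \shannon(Diag(\rho'_A)).
\]
Then Klein's lemma gives $\shannon(Diag(\rho'_A)) \geq \vn(\rho'_A) = \vn(\rho_A)$. Taking the minimum over $U_1, U_2$ yields $\Ent(\qpsi) \geq \vn(\rho_A)$, and combined with the upper bound this gives $\Ent(\qpsi) = \vn(\rho_A)$. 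The equality $\vn(\rho_A) = \vn(\rho_B)$ is the standard fact that the nonzero eigenvalues of the two reductions of a pure bipartite state coincide (again via Schmidt).

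There is no real obstacle here — the work has been done by Lemma~\ref{LemmaRho1} and Klein's lemma. The only point worth stating carefully is that the Schmidt decomposition is \emph{local}, i.e. that the transformation bringing $\qpsi$ to Schmidt form is of the form $U_1\otimes U_2$ and therefore admissible in the minimization defining $\Ent$; without this the upper bound would not follow.
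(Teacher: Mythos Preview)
Your proof is correct and follows essentially the same route as the paper: the upper bound via the Schmidt decomposition, and the lower bound by chaining the left inequality of Lemma~\ref{LemmaRho1} with Klein's lemma. If anything, you are more explicit than the paper in quantifying over arbitrary local unitaries $U_1,U_2$ for the lower bound (the paper applies the bound to $\qpsi$ itself and leaves the invariance of $\vn(\rho_A)$ under local unitaries implicit).
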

\begin{proof}
Consider the Schmidt decomposition of $\qpsi$:
$$\qpsi = \sum\limits_i \sqrt{\lambda_i}\qubit{i_A}\otimes \qubit{i_B}.$$
Then

$$\Ent(\qpsi)=\min\limits_{U_A,U_B}{\hmes(U_A \otimes U_B \qpsi)}\leq \shannon(\lambda_i) = \vn(\rho_A) = \vn(\rho_B).$$

From Lemma \ref{LemmaRho1} we get
$$\hmes(\qpsi) = \shrho{\rho} \geq \shannon(Diag(\rho_A)).$$
Further, by Klein's lemma we have
$$\shannon(Diag(\rho_A)) \geq \vn(\rho_A) =\vn(\rho_B).$$

Thus $\Ent(\qpsi)=\vn(\rho_A)=\vn(\rho_B).$
\end{proof}

\begin{prop}($\Ent$ of generalized GHZ states)
Consider a generalized GHZ-state:
$$\qubit{GHZ}=\sum\limits_{i=1}^d a_i{\enum{\qubit{i}}{\otimes}},$$
where $a_i \in \Real, \sum\limits_{i=1}^d |a_i|^2=1.$

Then
$$\hmes(\qubit{GHZ})=\min\limits_{\enum{U}{,}} \hmes(\enum{U}{\otimes}\qubit{GHZ}).$$
\end{prop}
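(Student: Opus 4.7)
The plan is to show that the computational basis (identity local unitaries) already minimizes $\hmes$ over the local unitary orbit, so that $\Ent(\qubit{GHZ})=\hmes(\qubit{GHZ})=\shannon(|a_1|^2,\ldots,|a_d|^2)$. The key observation is that the one-party reduced density matrix $\rho_1=\sum_{i=1}^d |a_i|^2\,\qrho{i}$ already has spectrum $\{|a_i|^2\}$, and its von Neumann entropy will serve as a universal lower bound on $\hmes$ along the whole orbit.

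First, I would compute $\hmes(\qubit{GHZ})$ directly from the definition: the state has exactly $d$ nonzero amplitudes $a_i$, so $\hmes(\qubit{GHZ})=\shannon(|a_1|^2,\ldots,|a_d|^2)$. Next, fix arbitrary local unitaries $U_1,\ldots,U_n$ and set $\qubit{\psi'}=(U_1\otimes\cdots\otimes U_n)\qubit{GHZ}$ with density matrix $\rho'$. Because local unitaries on subsystems $2,\ldots,n$ do not affect the first reduced density matrix, one has $\rho'_1=U_1\rho_1 U_1^{*}$, and hence $\vn(\rho'_1)=\vn(\rho_1)=\shannon(|a_1|^2,\ldots,|a_d|^2)$.

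Then I would apply Lemma \ref{LemmaRho1} to the bipartition (first qudit $\mid$ rest) to get $\shrho{\rho'_1}\leq \shrho{\rho'}=\hmes(\qubit{\psi'})$, and Klein's lemma to get $\shrho{\rho'_1}\geq \vn(\rho'_1)=\shannon(|a_1|^2,\ldots,|a_d|^2)$. Chaining these two estimates yields $\hmes(\qubit{\psi'})\geq \hmes(\qubit{GHZ})$ for every choice of $U_1,\ldots,U_n$, which is exactly the claimed equality of $\hmes(\qubit{GHZ})$ with the minimum over the local unitary orbit.

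The main obstacle is conceptual rather than computational: one has to recognize that for GHZ-type states the one-party reduced spectrum is invariant under \emph{any} local unitary (not merely unitaries on the first qudit), so that the bipartite identification $\Ent(\cdot)=\vn(\cdot)$ from Theorem \ref{thmVn} alone certifies optimality of the computational basis. Once this observation is in place, the proof collapses to a short chain of Lemma \ref{LemmaRho1} and Klein's lemma, with no new estimate needed.
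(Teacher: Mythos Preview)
Your proof is correct and takes essentially the same approach as the paper: both exploit the bipartition (first qudit vs.\ the rest) and the fact that $\qubit{GHZ}$ is already in Schmidt form for this cut, so that $\vn(\rho_1)=\shannon(|a_i|^2)$ lower-bounds $\hmes$ across the whole local-unitary orbit. The only difference is packaging: the paper invokes Theorem~\ref{thmVn} as a black box and then restricts the unitary on the second factor to products $U_2\otimes\cdots\otimes U_n$, whereas you apply the two ingredients of that theorem (Lemma~\ref{LemmaRho1} and Klein's lemma) directly.
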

\begin{proof}
We can consider the space of GHZ state as bipartite space of the
first qudit and the remaining qudits. Then
$$\qubit{GHZ}=\sum\limits_{i=1}^d a_i\qubit{i}_1\otimes \qubit{\widetilde{i}}.$$
From Theorem \ref{thmVn} and Remark \ref{remQudit} we have that
Schmidt decomposition has minimal measurements entropy. From this we
get
$$\hmes(\qubit{GHZ})=\hmes(\sum\limits_{i=1}^d a_i\qubit{i}_1\otimes \qubit{\widetilde{i}})\leq \hmes(U_1 \otimes \widetilde{U} \qubit{GHZ}),$$
where $U_1$ is a unitary transformation of the first qudit and
$\widetilde{U}$ is an arbitrary unitary transformation of
$2,3,\cdots,n$ qudits. We can take $U_2\otimes U_3 \otimes \cdots
\otimes U_n$ as $\widetilde{U}$, then
$$\hmes(\qubit{GHZ})\leq
\hmes(\enum{U}{\otimes}\qubit{GHZ}),$$
this finishes the proof.
\end{proof}
So we have $\Ent(\qubit{GHZ})=-\sum\limits_{i=1}^d |a_i|^2\ln
|a_i|^2.$

\subsection{Numerical properties}
The following properties were obtained numerically by genetic
algorithms.
\begin{numprop}($\Ent$ of generalized W states)

Consider a generalized W state
$$\qubit{W}=a_1\qubit{0\ldots01}+a_2\qubit{0\ldots10}+a_n\qubit{1\ldots00},$$
where $\sum\limits_{i=1}^{n}\abs{a_i}^2=1$. Then
$$\Ent(W) = \hmes(W) =
\sum\limits_1^n{{\abs{a_{i}}^2\ln{\abs{a_{i}}^2}}}.$$
\end{numprop}

\begin{numprop}
Let
$$\qpsi=\sum\limits_{\enum{i}{,}=1}^{d}a_{\enum{i}{,}}\qubit{\enum{i}{}},$$
$$\qphi=\sum\limits_{\enum{i}{,}=1}^{d}b_{\enum{i}{,}}\qubit{\enum{i}{}},$$
and $\hmes(\qpsi)=\hmes(\qphi)=\Ent(\qpsi)=\Ent(\qphi)$ (in other
words $\qpsi$ and $\qphi$ have equal $\Ent$ and they are in minimal
entropy representation), then
$$|a_{\enum{i}{,}}|^2 = |b_{\enum{i}{,}}|^2$$
to within local permutations of basis vectors.

I.e. modulus squares of a minimal entropy representation of the
local unitary orbit are unique.
\end{numprop}

\subsection{$\mathbf{\Ent}$ is substantially multipartite}
Numerical computations using genetic algorithms show that two
equivalent under bipartite entanglement states of three qubits
$\qphi$ and $\qpsi$ may have different $\Ent.$ This means that
$\Ent$ is substantially multipartite. Equivalence under bipartite
entanglement means

$$\begin{array}{c}
\exists U^1_1,U^1_{23},U^2_2,U^2_{13},U^3_3,U^3_{12}:\\
\qphi=U^1_1\otimes U^1_{23}\qpsi,\\
\qphi=U^2_2\otimes U^2_{13}\qpsi,\\
\qphi=U^3_3\otimes U^3_{12}\qpsi,
\end{array}
$$
where $U^k_i$ are unitary evolutions of $i$-th qubit, and $U^k_{ij}$
are unitary evolutions (may be entangled) of $i$-th and $j$-th
qubit.

From $\Ent(\qpsi)\neq\Ent(\qphi)$ we have that $\qphi$ and $\qpsi$
are not equivalent under local unitary transformations, i.e
$$\nexists U_1,U_2,U_3:\qphi=U_1\otimes U_2\otimes U_3 \qpsi.$$

This property of $\Ent$ gives an answer to the problem of
equivalence of bipartite and multipartite entanglement proposed in
\cite{burkov2006aaq}: the bipartite and multipartite entanglements
are not equivalent. This also means that entanglement measures based
only on Schmidt coefficients of different decompositions of a state
are not good for quantifying multipartite entanglement.

\section{Calculation of $\mathbf{\Ent}$ using genetic algorithms}
\subsection{Formalization of the optimization problem}
\label{formalization}Consider the $n$-qudit state
$$\qpsi=\sum\limits_{\enum{i}{,}=1}^{d}a_{\enum{i}{,}}\qubit{\enum{i}{}}.$$

Then
$$\Ent(\qpsi)=\min\limits_{\enum{U}{,}}\hmes(\enum{U}{\otimes}\qpsi),$$
where $U_i$ is a $d$-dimension unitary operator on $i$-th qudit,
$$\hmes(\qpsi)=\sum\limits_{\enum{i}{,}=1}^{d}\abs{a_{\enum{i}{,}}}^2\ln{\abs{a_{\enum{i}{,}}}^2}.$$

To solve this optimization problem we need to parameterize unitary
matrices $U_i$. When $d=2$ this parametrization is well known:
\begin{equation}
U_i(\beta_i,\delta_i,\gamma_i)=\left(
              \begin{array}{cc}
                e^{i(-\beta_i-\delta_i)}\cos\gamma_i & -e^{i(-\beta_i+\delta_i)}\sin\gamma_i \\
                e^{i(\beta_i-\delta_i)}\sin\gamma_i & e^{i(\beta_i+\delta_i)}\cos\gamma_i \\
              \end{array}
            \right),
\end{equation}
where $\beta_i,\delta_i,\gamma_i$ are real numbers.

Parametrization of unitary matrices for $d>2$ is a still open and
interesting question. The best known parametrization is proposed in
\cite{tilma2004gea,tilma2002gea}, but it is very slow, so we have
chosen another one. Let's parameterize hermitian matrix $H$ by $d^2$
real numbers (the diagonal has $d$ real numbers, and $d(d-1)/2$
complex numbers above the diagonal need $d(d-1)$ real numbers to be
parameterized). Then we take $U=e^{iH}$  as a unitary matrix.

In such a way the calculation of $\Ent$ is an optimization task of
$3n$ real parameters for qubits and $nd^2$ real parameters for
qudits ($d>2$). To calculate $\Ent(\qpsi)$ we need to minimize the
function
$$f^\psi(x_1,\ldots,x_{n\cdot k})=\hmes(U(x_1,\ldots,x_k)\otimes U(x_{k+1},\ldots,x_{2k}) \otimes \ldots \otimes U(x_{(n-1)k+1},\ldots,x_{n\cdot k})\qpsi),$$
where $U(x_1,\ldots,x_k)$ is a parametrization of $d\times d$
unitary matrix by $k$ real parameters.

In the general case, functions $f^\psi(x_1,\ldots,x_{n\cdot k})$ may
be multimodal (have many local minimums), because of this fact we
can't use gradient-based optimization methods. Thus genetic
algorithm has been chosen.

\subsection{Genetic algorithm (GA)}
There are a lot of publications about GA, but for a brief overview
and references we recommend Wikipedia \cite{wiki:ga}.

GA has been already used for quantum entanglement calculation (to
calculate relative entropy of entanglement of mixed bipartite
states) in \cite{ramos2002cqe}.

Now we describe GA that was used for $\Ent$ calculations.

Every parameter $x$ of $f^\psi$ is being encoded by $n_{gen}$ real
value genes $g_{j},j=\overline{1,n_{gen}}$ by the rule
$x=\sum\limits_{j=1}^{n_{gen}}10^{1-j}g_{j}$. So, if we have $k$
parameters, then a chromosome is a vector $\{g_i\}$ of length
$n_{gen}\cdot k$. (A full tuple of parameters of $f^\psi$ is encoded
by a chromosome.)

A mutation of a chromosome $\{g_i\}$ is determined by the following
probabilities:
$$P(g_i^m=g_i)=(1-p_{mut}),\qquad
P(g_i^m=g_i+\xi)=p_{mut},$$ where $\{g_i^m\}$ is a chromosome after
mutation, $p_{mut}$ is a mutation probability, $\xi$ is a random
variable that has uniform distribution on $[-m_{mut},m_{mut}]$.

Crossover of chromosomes $\{g_i^1\}$ and $\{g_i^2\}$ is a chromosome
$\{g_i^r\}$, where
$$P(g_{i}^r=g_{i}^1)=P(g_{i}^r=g_{i}^2)=\frac{1}{2}.$$

A fitness function is \quad $-f^\psi.$

\begin{samepage}
\textit{The algorithm.}
\begin{enumerate}
\item Initialize the first population of $n_{population}$ random
chromosomes with uniformly distributed on $[-m_{init},m_{init}]$
genes.

\item Wait an epoch (this step will be described further). After the epoch we have a new population.

\item If one of termination conditions is satisfied, algorithm stops and the
fitness of the best chromosome of the last population is taken as a
result, else we repeat from Step 2.
\end{enumerate}
\end{samepage}
\bigskip

\begin{samepage}
\textit{Termination conditions:}

We can fix the maximal number of epochs($n_{epochs}$), the precision
$\eps$, and the maximal number of nonchanging epoches $n_{term}$.
Then termination conditions will be:

--We reach $n_{epochs}$ epoche.

--Examine the best chromosome from each of the last $n_{term}$
epoches. If the fitness function values of these chromosomes differ
from each other by less than $\eps$, then this termination condition
is true.
\end{samepage}
\bigskip

\begin{samepage}
\textit{The epoch.}
\begin{enumerate}

\item Two random pairs of chromosomes are chosen from the best $n_{population}-n_{bad}$ chromosomes of the
population, $n_{bad}$ is a number of the weakest chromosomes of the
population, which cannot be used for reproduction. From each pair we
take a chromosome with the best fitness. After that we put a
crossover result of the two selected chromosomes to the new
population.

\item Repeat Step 1 while the new population is lesser than
$n_{population}$.

\item Mutate every chromosome of the new population.
\end{enumerate}
\end{samepage}

While $f^\psi$ for $\qpsi$ from local orbits of "easy" states like
GHZ or W has no "difficult" local minimums, the situation is reverse
for states with random amplitudes. For example, one $7$-qubit state
has a local minimum $~4.0220$, while another one (probably global)
minimum is $~3.968$. To solve this problem we use GA with "islands":
we form $n_{islands}$ islands with rather small equal populations.
Epoches at these islands are independent and only infrequent
migrations are allowed. The increase of islands count is an
equivalent of the the whole GA repeating, so the probability of an
error decreases exponentially with the number of islands. Increase
of the probability of migration speedups the convergence, but it
also increases the probability of local minimums results.

\bigskip

\emph{\textbf{Remark}}\emph{(about software implementation)}

One of the advantages of GA is its parallelism, so the software
realization of $\Ent$ calculation was multithreading. Moreover, the
implementation of $\Ent$ calculation for qubits was realized using
nVidia CUDA \cite{cuda} technology. Using cheap personal GPU,
\rm{7x} speedup against the best 4-core CPU implementation has been
achieved. Software complex can easily calculate $\Ent$ for up to 17
qubits inclusively.

\section{$\mathbf{\Ent}$ for fermionic states}

Consider a pure state of $n$ fermions in $p$-dimensional Hilbert
space with basis $\enum{f}{,}$:
$$\qubit{f}=\sum\limits_{\enum{i}{,}=1,i_1<i_2< \ldots <i_n}^{p}\lambda_{\enum{i}{}} \qubit{\enum{i}{}},$$
where
$$\sum\limits_{\enum{i}{,}=1,i_1<i_2< \ldots <i_n}^{p}|\lambda_{\enum{i}{}}|^2=1,$$
$$\qubit{\enum{i}{}}=\frac{1}{\sqrt{n!}}\left|
\begin{array}{*{20}{c}}
    f_{i_1}(r_1) &  \ldots  & f_{i_1}(r_n)  \\
    \vdots  &  \ddots  &  \vdots   \\
    f_{i_n}(r_1) &  \cdots  & f_{i_1}(r_n)  \\
\end{array}
\right|$$ are Slater determinants.

This states correspond to normalized elements of the exterior
algebra $\Lambda^n \mathbb{C}^k$. Let's take states that correspond
to separable elements $\enum{x}{\bigwedge}$ of $\Lambda^n
\mathbb{C}^k$ as unentangled, i.e. the states that are represented
as a single Slater determinant in \emph{some basis} of one-particle
$\mathbb{C}^k$ space are unentangled. Thus, we can take unitary
changes of basis as unentangling transforms.

This is more mathematical than physical approach. For example, if we
have two electrons in two quantum dots their one-particle space must
at the least include spin and position coordinates. In the simplest
case the whole space will be $H_{spin}\otimes H_{position}$ and its
basis will be $\qubit{\uparrow}\otimes\qubit{1}$,
$\qubit{\uparrow}\otimes\qubit{2}$,
$\qubit{\downarrow}\otimes\qubit{1}$,
$\qubit{\downarrow}\otimes\qubit{2}$. As we can see, unitary
transformations of this basis are ``entangled'' in an intuitive
physical way. So, the construction of a physical formalism of
unentangled local transformations of identical particles is an
important direction for future research.

But mathematical approach is nevertheless very important and is used
in many papers devoted to the entanglement of indistinguishable
particles. For example, the Slater decomposition (an analogue of the
Schmidt decomposition) of two fermions is constructed in
\cite{schliemann2000qct}, and we will discuss it further. Another
interesting result is a classification of $\Lambda^3 \mathbb{C}^6$
fermionic states in terms of SLOCC \cite{levay-2008-78}.

Now let's define $\Ent$ for fermionic states.

The change of basis is determined by the unitary matrix $U$ and in
the new basis the state becomes
$$U \circ \qubit{f}=\sum\limits_{\enum{i}{,}=1,i_1<i_2< \ldots <i_n}^{p}\lambda^{'}_{\enum{i}{}} \qubit{\enum{i}{}},$$
$$\lambda^{'}_{\enum{j}{,}} = \sum\limits_{\enum{i}{,}=1, i_1<i_2< \ldots <i_p}^{p} \lambda_{\enum{j}{,}}M^{{\enum{j}{,}}}_{\enum{i}{,}},$$
where $M^{{\enum{j}{,}}}_{\enum{i}{,}}$ is a determinant of the
matrix that is constructed by the crossing of \enum{j}{,} columns
and \enum{i}{,} rows of $U$.

Measurements entropy of $\qubit{f}$ is
$$\hmes(\qubit{f})=\sum\limits_{\enum{i}{,}=1,i_1<i_2< \ldots <i_n}^{p}|\lambda_{\enum{i}{}}|^2 \log |\lambda_{\enum{i}{}}|^2.$$
And
$$\Ent(\qubit{f})=\min\limits_{U}\hmes(U\circ\qubit{f}).$$

The parametrization of $U$ is described in Section
\ref{formalization}. To get $U\circ\qubit{f}$ amplitudes we need to
calculate all minors of $U$, this can be done recursively using
determinant expansion by minors.

\subsection{Slater decomposition}
Now consider a state of $2$ fermions in $2p$-dimensional space:
$$\qubit{f_2}=\sum\limits_{i_1,i_2=1, i_1<i_2}^{2p}\lambda_{i_1i_2}\qubit{i_1i_2}.$$
Then this state can be represented as a linear combination of only
$p$ Slater determinants using a unitary change of basis
\cite{schliemann2000qct}:
$$U_{slater}\circ\qubit{f_2} = \sum\limits_{i=1}^{p}z_{i}\qubit{2i,2i+1}.$$
This representation is called \emph{Slater decomposition}.

This fact follows from the existence of the symplectic basis for
antisymmetric matrices \cite{vinberg2003ca}, full proof may be found
in \cite{schliemann2000qct}.

An important \emph{numerical} result has been achieved for Slater
decomposition and $\Ent$
\begin{num}Let $\qubit{f_2}$ be the state of $2$ fermions with $2p$-dimensional
one-particle space. Then the following two conditions are
equivalent:

1. The state $\qubit{f_2}$ is in the Slater decomposition
representation:
$$\qubit{f_2} = \sum\limits_{i=1}^{p}z_{i}\qubit{2i,2i+1}.$$

2. $\Ent(\qubit{f_2})=\hmes(\qubit{f_2}).$

(I.e. Slater decomposition, like Schmidt decomposition, has minimal
measurements entropy, and the process of finding $\Ent$ for a state
of two fermions is equal to finding its Slater decomposition.)

\end{num}

\section{Conclusion}
Entanglement measure $\Ent$ for multipartite pure states  was
presented. Also we proved that this measure is additive, satisfies
all necessary entanglement measure conditions,  and coincides with
the reduced von Neumann entropy for bipartite states. The method of
numerical calculation of $\Ent$ by genetic algorithm was presented
and tested on up to 17 qubits inclusively. Moreover, $\Ent$ was
generalized to fermionic states, and this generalization corresponds
to Slater decomposition for two-fermions states.

\begin{acknowledgments}
The author is grateful to professor Yu.I.~Ozhigov for problem
formulation and constant attention to this work, to professor
Yu.I.~Bogdanov for helpful discussions, and to professor A.S.~Holevo
for very useful advice to pay attention to the book ``Matrix
analysis'' by R.~Bhatia \cite{bhatia1997ma}.

\bigskip

This work is partially supported by RFBR grant 09-01-00347-a.
\end{acknowledgments}
\bibliography{mybib}{}
\bibliographystyle{unsrt}

\end{document}